\newtheorem{thm}{Theorem}[section]
\theoremstyle{definition}
\newtheorem{example}{Example}[section]
\begin{document}

\title{CRT Based Spectral Convolution in Binary Fields}

\author{Muhammad Asad Khan, Sajid Saleem, Amir A Khan\\ 
School of Electrical Engineering and Computer Science \\
National University of Sciences and Technology\\
Islamabad, Pakistan \\
{\textit {\{asad.khan, sajid.saleem, amir.ali\}@seecs.edu.pk}}
}
\maketitle

\begin{abstract}
  In this paper, new results on convolution of spectral components in binary fields have been presented for combiatorial sequences. A novel method of convolution of DFT points through Chinese Remainder  Theorem (CRT) is presented which has lower complexity as compared to  known methods of spectral point computations. Exploring the inherent structures in cyclic nature of finite fields, certain fixed mappings between the spectral components from composite fields to their decomposed subfield components has been illustrated which are significant for analysis of combiner generators. Complexity estimations of our CRT based methodology of convolutions in binary fields proves that our proposed method is far efficient as  comparised to to existing methods of DFT computations for convolving sequences in frequency domain. 
\end{abstract}

\section{Introduction}

Let $GF(2^n)$ be a Galois field and the integer $N$ divides $2^n - 1$.
From a known theory of digital signal processing, the convolution in time domain corresponds to multiplication in frequency domain as \cite{reed1975use}:

\begin{equation}
u_p = \sum^{d-1}_{i=0} a_{i}b_{(p-i)}
\end{equation}   
 is equivalent to spectral point multiplication in binary fields as:
 
 \begin{equation}\label{DFT-product}
U_k = A_kB_k
\end{equation}
 
 where \textbf{U}$_k$, \textbf{A}$_k$ and \textbf{B}$_k$ are Fourier components of \textbf{u}$_i$, \textbf{a}$_i$  and \textbf{b}$_i$, for $0\leq i \leq N-1$ and $0\leq k \leq N-1$.
 
Similarly treating the problem conversely, multiplication in time domain is equivalent to convolution in frequency domain. As bit wise multiplication is a fundamental block for any cipher design, frequency domain analysis of these cryptographic algorithms involves convolution theory invariably. 

Consider a \textit{N}-tupple sequence \textbf{u}$ ^N$ = [ $u_0,u_1,\cdots ,u_{N-1}$]  which is bit-wise product $u_i = a_i .b_i $($i=0,1,\cdots,N-1$) of two \textit{N}-tupple sequences \textbf{a}$ ^N$ = [$a_0,a_1,\cdots ,a_{N-1}$] and \textbf{b}$ ^N$ = [$b_0,b_1,\cdots ,b_{N-1}$]. From \cite{massey1994fourier},   
the frequency domain \textit{N}-tupple \textbf{U}$ ^N$ = [ $U_0,U_1,\cdots ,U_{N-1}$] is cyclic convolution of of $A^N$ and $B^N$ as:
\begin{equation}\label{conv-basic}
u_j = \frac{1}{N^*}\sum_{k=0}^{N-1}A_{(j-k)}B_k \;\;\;\;\;for\;\;j=0,1,\cdots,N-1.
\end{equation}
 
The frequency domain scenerio of two \textit{N}-tupple sequences belonging to same binary fields is simple to relate with Equation~(\ref{DFT-product}) and~(\ref{conv-basic}). However, when  sequences belong to different binary fields, the relationship becomes little complex.  Consider an LFSR sequence \textbf{a}$_t \;\in GF(2^p)$ having a period $n_1|(2^p-1)$ and another LFSR sequence \textbf{b}$_t \;\in GF(2^q)$ with period $n_2|(2^q-1)$. The associated DFT components of  these sequences are represented in terms of powers of primitive elements of their respective binary fields as $\alpha \in GF(2^p)$  and $\beta \in GF(2^q)$ for instance. Computing product of terms $A_{(j-k)}B_k$ directly in Equation~(\ref{conv-basic}), represented in terms of primitive elements $\alpha \in GF(2^p)$  and $\beta \in GF(2^q)$ and belonging to different binary fields, is not simple and much of the details have not been discussed even in~\cite{ding1996chinese},\cite{Blahut1983errorcontrol}, \cite{golic1995linear} and \cite{golomb2005signal}. In this paper, we have dicussed this apsect explicitly and presented  new method of computing spectral convolutions in binary fields.    
 
Chinese Remainder Theorem (CRT) based computations of convolution between elements belonging to different binary fields is introduced as our novel finding in this paper. In Section-2, we have presented main idea of our work followed by illustration through an example in binary fields. Section-3 covers discussion on application of CRT based DFT computations in analysis of combinatorial sequences through subspace decomposition. New results are demonstrated through small examples for clarity of context. The computational efficiency of our CRT based proposed method in comparison to existing method of DFT computations in binary fields has been discussed in Section-4.  The paper is finally concluded in section-5.
 
 \section{Spectral Convolution in Binary Fields and CRT}
 
Over the past few years,  spectral analysis of LFSR based sequence generators is introduced as a promising idea in cryptanalysis of stream ciphers and fundamental in the series is discrete fast fourier spectra  attacks on filter generators~\cite{gong2011fast}. In case of combiner generators, typical designs involve number of LFSRs based on  primitive connection polynomials having periods co-prime to each other for attaining maximum key-stream periods.  In this case, when number of involved binary fields increase, convolution of spectral components represented in elements belonging to different binary fields is inevitable. To illustrate this, let we consider two sequences  \textbf{a}$_i \;\in\; GF(2^p)$ and \textbf{b}$_i\;\in\; GF(2^q)$ based on primitive elements and $2^{(p-1)}$ and $2^{(q-1)}$ are coprime to each other where $p < q$. If we consider a simplest case of bit wise product, being part of any non-linear boolean mapping, as $s_i = a_i .b_i $ where ($i=0,1,\cdots,N-1$) such that $(p-1)|N$ and $(q-1)|N$, their fourier transform is determined using the relations:

\begin{equation}\label{DFT-A eq}
  A_{k} = \sum_{t=0}^{p-1} a_{t}\alpha^{-tk} , \;\;\;  k = 0,1,2,.....,p-1
\end{equation}
and
\begin{equation}\label{DFT-B eq}
  B_{k} = \sum_{t=0}^{q-1} b_{t}\beta^{-tk} , \;\;\;  k = 0,1,2,.....,q-1
\end{equation}

 where $A_k$ and $B_k$ are $k$-th frequency components of DFTs of \textbf{a}$_t \;\in\; GF(2^p)$ and \textbf{b}$_t \;\in\; GF(2^q)$ where $\alpha$ and $\beta$ are the primitive elements within their respective fields; generators of $GF(2^p)$ and $GF(2^q)$ with periods $(p-1)$ and $(q-1)$ respectively~\cite{pollard1971fast}. 
 For \textbf{u}$_t$, Berlekamp-Massey algorithm~\cite{golomb2005signal} gives associated minimum polynomail $g(x) \in GF(2^n)$ of \textbf{u}$_t$. Classically, DFT of \textbf{u}$_t$  is then taken with respect to $g(x) \in GF(2^n)$ as:
 \begin{equation}\label{DFT-S eq}
  U_{k} = \sum_{t=0}^{q-1} u_{t}\lambda^{-tk} , \;\;\;  k = 0,1,2,.....,N-1
\end{equation}
For this case, we need $N$ bits of stream \textbf{u}$_t$ for computing each component of \textbf{U} and all computations are in $GF(2^n)$. For practical scenerios of cryptanalysis, availibility of $N$ number of bits may not be practical with non-feasible computational complexity. 
While stydying the behaviour of underline binary fields involved in LFSR based combiner generators, certain fixed patterns have been observed whose detailed discussion is given in~\cite{khan2015crt}. Here we discuss the aspects related to spectral convolutions in particular though an example.  

\begin{example}\label{exm-2}
Consider a sequence \textbf{u}$_t$ generated from product of two LFSRs having primitive polynomials of $g_1(x) =  x^2+x+1$ and $g_2(x)  = x^3+x+1$. The period $n_1$ of stream \textbf{a}$_{t}$ corresponding to LFSR-1 is $3$ and $n_2$ of \textbf{b}$_{t}$ corresponding to LFSR-2 is $7$. The period $N$ of \textbf{u}$_t$ is $21$.

\begin{enumerate}
\item In time domain representation, we have following sequences with initial state of '01' and '001' for both LFSRs as:
\begin{enumerate}
\item Sequence \textbf{a}$_t$:\;\;\;$011$\;\;\;\;\;\;\;\;\;\;\;\;\;\;\;\;\;\;\;\;\;\;\;\;\;\;\;\;\;\;\;\;\;\;\;\;\;\;\;\;\;\;\;\;\;\;\;\;\;\;\;\;\;\;\;\;\;\;\;\;\;\;\;\;\;\;\;\;\;\;\;\;\;(of period 3)
\item Sequence \textbf{b}$_t$:\;\;\;$0010111$\;\;\;\;\;\;\;\;\;\;\;\;\;\;\;\;\;\;\;\;\;\;\;\;\;\;\;\;\;\;\;\;\;\;\;\;\;\; \;\;\;\;\;\;\;\;\;\;\;\;\;\;\;\;\;\;\;\;\;\;\;\;\;\;(of period 7)
\item Sequence \textbf{u}$_t$:\;\;\;$001011000001010010011101110111011101$\;\;\;\;\;\;\;\;\;\;\;\;\;\;(of period 21)
\end{enumerate}
\item From equations~(\ref{DFT-A eq}), (\ref{DFT-B eq}) and (\ref{DFT-S eq}) , frequency domain representations of these sequences represented in powers of roots of their associated binary fields as $\alpha \in GF(2^2)$, $\beta \in GF(2^3)$ and $\lambda \in GF(2^6)$, are:
\begin{enumerate}
\item \textbf{$A$} $ = 0,1,1$.
\item \textbf{$B$}  $= 0,0,0,\beta^4, 0,\beta^2,\beta$.
\item To compute \textbf{$S$}, associated minimaum polynomial, determined through Berlekamp-massey algorithm,  is: 
$g(x) = x^6+x^4+x^2+x+1$.\\
\item The DFT of \textbf{u}$_t$ is:
\textbf{$U$} $= 0,0,0,0,0,\lambda^9,0,0,0,0, \lambda^{18},0,0,\lambda^{15},0,0,0,\lambda^{18},0,\lambda^9,\lambda^{15}$.
\end{enumerate}
\end{enumerate} 
\end{example}

To compute DFT of \textbf{u}$_t$, we first determined its associated minimal polynomial through Berlekamp-massey algorithm and then carried out DFT computations in $GF(2^6)$. We have observed that there exists a fixed mapping between elements belonging to base fields of $GF(2^2)$ and $GF(2^3)$ to their product field $GF(2^6)$ which can be exploited to determine spectral componenets of product stream. To illustrate these novel observations, we arrange spectral  components of \textbf{$A$}, \textbf{$B$} and \textbf{$U$} in a Table~\ref{tab:ABC-1} as:

\begin{table}[!ht]
\small
\begin{center}
\caption[Sample Table]{Spectral Components of \textbf{u} = \textbf{a}.\textbf{b}}
\begin{tabular}{|c| c| c| c| c| c| c| c| c| c| c| c| c|}
\hline
 
 Index & 0 & 1 & 2 & 3 & 4 & 5 & 6 & 7 & 8 & 9 & 10 & 11    \\ \hline

 \textbf{$A$} & 0 & $\alpha^0$ & $\alpha^0$ & 0 & $\alpha^0$ & $\alpha^0$ & 0 & $\alpha^0$ & $\alpha^0$ & 0 & $\alpha^0$ & $\alpha^0$    \\ \hline
 
 \textbf{$B$} & 0 & 0 & 0 & $\beta^4$ & 0 & $\beta^2$ & $\beta$ & 0 & 0 & 0 & $\beta^4$ & 0    \\ \hline \hline

\textbf{$U$} & 0 & 0 & 0 & 0 & 0 & $\gamma^9$ & 0 & 0 & 0 & 0 & $\gamma^{18}$ & 0    \\ \hline \hline
 Index & 12 & 13 & 14 & 15 & 16 & 17 &18 & 19 & 20 &  &  &     \\ \hline 
 
 \textbf{$A$} & 0 & $\alpha^0$ & $\alpha^0$ & 0 & $\alpha^0$ & $\alpha^0$ & 0 & $\alpha^0$ &$\alpha^0$  &  &  &     \\ \hline
 
 
 \textbf{$B$} & $\beta^2$ &$\beta$ & 0 & 0 & 0 & $\beta^4$ & 0 & $\beta^2$ & $\beta$ &  &  &     \\ \hline \hline
 
 \textbf{$U$} & 0 & $\lambda^{15}$ & 0 & 0 & 0 & $\lambda^{18}$ & 0 & $\lambda^9$ & $\lambda^{15}$ &  &  &  \\ \hline

\end{tabular}

\label{tab:ABC-1}
\end{center}
\end{table}

It is very clear from Table~\ref{tab:ABC-1} that non-zero spectral component of  \textbf{U} only exists where corresponding DFT points of \textbf{A} and \textbf{B} are non-zero. Moreover, their exists a certain fixed mapping from DFT points of \textbf{A} and \textbf{B} to \textbf{S}.  Theorem~\ref{CRT-Ptheorem} describes the phenomenon of this fixed mapping.

\begin{thm}
\label{CRT-Ptheorem}
Let \textbf{u}$_t\in GF(2^n)$ be a product sequence with period $N  \mid 2^n - 1$ and $0 \leq t \leq N-1$, having two constituent sequences \textbf{a}$_{t\;(mod\;n_1)}$ $ \;\in\; GF(2^p)$ and \textbf{b}$_{t\;(mod\;n_2)}$ $\;\in\; GF(2^q)$ based on primitive polynomials where periods  $n_1\; = \; 2^{(p-1)}$  and  $n_2\; = \;2^{(q-1)}$  are coprime to each other and $p < q$. Let \textbf{$A$} be a DFT spectra of \textbf{a}, \textbf{B} be a DFT spectra of \textbf{b} and \textbf{U} be a DFT spectra of \textbf{u}, any $k^{th}$ spectral component  \textbf{U}$_k$ of \textbf{U},  corresponding to non-zero spectral components of \textbf{A} and \textbf{B}, can be determined directly through CRT as:

      \begin{eqnarray*}
      d_k  &\equiv & d_{k_{1}}\mbox{\; (mod} \mbox{\;}n_1) \\
      d_k  &\equiv & d_{k_{2}}\mbox{\; (mod} \mbox{\;}n_2) \\
               \end{eqnarray*}
      
   where $d_k$, $d_{k_{1}}$ and $d_{k_{2}}$ are degrees of non-zero spectral components of \textbf{$U_k$}, \textbf{$A_{k\;mod(p-1)}$}, and \textbf{$B_{k\;mod(q-1)}$}  represented in terms of associated roots $\lambda \in GF(2^n)$,  $\alpha \in GF(2^{p})$ and $\lambda \in GF(2^{q})$ of    
minimal polynomials of \textbf{u}$_t$, \textbf{a}$_{t\;mod(p-1)}$ and \textbf{b}$_{t\;mod(q-1)}$ respectively.
\end{thm}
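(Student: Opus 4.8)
The plan is to derive the two CRT congruences directly from the multiplicative structure of the product sequence rather than from the convolution sum (\ref{conv-basic}), exploiting that $N = n_1 n_2$ with $\gcd(n_1,n_2)=1$. First I would fix the ring isomorphism $\mathbb{Z}_N \cong \mathbb{Z}_{n_1}\times\mathbb{Z}_{n_2}$ supplied by the Chinese Remainder Theorem: every index $t$ corresponds to the pair $(t_1,t_2)=(t \bmod n_1,\; t \bmod n_2)$, and conversely $t \equiv e_1 t_1 + e_2 t_2 \pmod N$, where $e_1,e_2$ are the CRT idempotents satisfying $e_1\equiv 1,\ e_2\equiv 0 \pmod{n_1}$ and $e_1\equiv 0,\ e_2\equiv 1 \pmod{n_2}$. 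Since $\mathbf{u}_t = a_{t \bmod n_1}\, b_{t \bmod n_2}$ by hypothesis, substituting into $U_k = \sum_{t=0}^{N-1} u_t \lambda^{-tk}$ and re-indexing the single sum over $t$ as a double sum over $(t_1,t_2)$ factorises it:
\begin{equation*}
U_k = \left( \sum_{t_1=0}^{n_1-1} a_{t_1}\, \lambda^{-e_1 t_1 k} \right)\left( \sum_{t_2=0}^{n_2-1} b_{t_2}\, \lambda^{-e_2 t_2 k} \right).
\end{equation*}

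Next I would identify each factor as a subfield spectrum. Because $n_1 = 2^p-1$ and $n_2 = 2^q-1$ both divide $N \mid 2^n-1$, we get $p \mid n$ and $q \mid n$, so $GF(2^p)$ and $GF(2^q)$ embed in $GF(2^n)$ and their multiplicative groups are the unique subgroups of orders $n_1$ and $n_2$. A short order computation using $\lambda^N = 1$ and $\gcd(n_1,n_2)=1$ shows $\lambda^{e_1}$ has order exactly $n_1$ and $\lambda^{e_2}$ order exactly $n_2$; hence $\lambda^{e_1}$ is a primitive element of $GF(2^p)$ and $\lambda^{e_2}$ one of $GF(2^q)$. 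Choosing $\alpha = \lambda^{e_1}$ and $\beta = \lambda^{e_2}$ (the normalisation that makes the supports align in Table~\ref{tab:ABC-1}), the two factors become exactly $A_{k \bmod n_1}$ and $B_{k \bmod n_2}$ from (\ref{DFT-A eq})--(\ref{DFT-B eq}), giving
\begin{equation*}
U_k = A_{k \bmod n_1}\, B_{k \bmod n_2}.
\end{equation*}
This identity already yields the support observation, namely that $U_k \neq 0$ precisely when both $A_{k \bmod n_1}$ and $B_{k \bmod n_2}$ are non-zero.

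Finally I would pass to exponents. Writing the non-zero components as $A_{k \bmod n_1} = \alpha^{d_{k_1}}$ and $B_{k \bmod n_2} = \beta^{d_{k_2}}$ and substituting $\alpha = \lambda^{e_1}$, $\beta = \lambda^{e_2}$ gives $U_k = \lambda^{e_1 d_{k_1} + e_2 d_{k_2}}$, so $d_k \equiv e_1 d_{k_1} + e_2 d_{k_2} \pmod N$. Reducing modulo $n_1$ and modulo $n_2$ and invoking the defining congruences of the idempotents collapses this to $d_k \equiv d_{k_1} \pmod{n_1}$ and $d_k \equiv d_{k_2} \pmod{n_2}$, which is exactly the asserted system; on the Example~\ref{exm-2} data one has $e_1=7$, $e_2=15$, and e.g. $d_5 = 7\cdot 0 + 15\cdot 2 \equiv 9 \pmod{21}$, matching Table~\ref{tab:ABC-1}. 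The step I expect to be the main obstacle is precisely the normalisation in the middle paragraph: one must pin down the relationship between the generator $\lambda$ of the product spectrum and the subfield generators $\alpha,\beta$ so that the factored sums are literally $A$ and $B$ rather than reindexed copies $A_{rk}, B_{sk}$. An arbitrary root $\lambda$ of the minimal polynomial would only force $\alpha = \lambda^{c_1 n_2}$, $\beta = \lambda^{c_2 n_1}$ for some units $c_1,c_2$, replacing the clean congruences by $d_k \equiv c_1 n_2\, d_{k_1} \pmod{n_1}$ and $d_k \equiv c_2 n_1\, d_{k_2} \pmod{n_2}$; establishing that the canonical idempotent choice $c_1 n_2 \equiv 1$, $c_2 n_1 \equiv 1$ is available (or absorbing the units into a relabelling of the spectra) is what makes the theorem hold as stated.
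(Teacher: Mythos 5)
Your proposal is correct, and while it shares the paper's overall skeleton --- factor the product spectrum as $U_k = A_{k \bmod n_1} B_{k \bmod n_2}$, then pass to exponents and apply CRT --- the way you establish the factorization is genuinely different and substantially more rigorous. The paper gets from $\sum_t (a_t\alpha^{tk})(b_t\beta^{tk})$ to a product of two sums in Equation~(\ref{DFT- eq6}) by appealing to ``orthogonality and cyclic behaviour'' with no further argument; as literally written (two full sums, each over $t=0,\dots,N-1$) that step is not even formally correct, and it is exactly the gap your re-indexing fills: the CRT bijection $t \leftrightarrow (t_1,t_2)$ between $\mathbb{Z}_N$ and $\mathbb{Z}_{n_1}\times\mathbb{Z}_{n_2}$ turns the single sum into a double sum that splits cleanly because the summand is a product of a function of $t_1$ alone and a function of $t_2$ alone. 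Your normalisation $\alpha=\lambda^{e_1}$, $\beta=\lambda^{e_2}$ is equivalent to the paper's choice $\lambda=\alpha\beta$ (under your choice $e_1+e_2\equiv 1 \pmod{N}$, so $\alpha\beta = \lambda^{e_1+e_2} = \lambda$), hence both proofs rest on the same compatibility between the three primitive elements; the difference is that you flag this as a condition that must be secured --- and you correctly note what fails for an arbitrary conjugate choice of $\lambda$, namely that the clean congruences acquire unit factors --- whereas the paper slips the same normalisation in via the assertion that all roots of $g(x)$ have the form $(\alpha\beta)^{2^{l}}$. Finally, your idempotent bookkeeping buys something the paper's proof never states: the explicit reconstruction $d_k \equiv e_1 d_{k_1} + e_2 d_{k_2} \pmod{N}$, which is the formula one actually evaluates to produce the entries of Table~\ref{tab:ABC-1} (your check $d_5 = 7\cdot 0 + 15\cdot 2 \equiv 9 \pmod{21}$ matches $U_5=\lambda^9$), rather than a bare existence-and-uniqueness appeal to CRT as in Equations~(\ref{gamma-mod}) and~(\ref{gamma-mod1}).
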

 
\begin{proof}

All roots of  minimal polynomials of \textbf{a}$_t$, \textbf{b}$_t$ and \textbf{s}$_t$ lie within their respective fields i.e. $\alpha \in GF(2^{p})$, $\beta \in GF(2^{q})$ and $\lambda \in GF(2^n)$ respectively.
As $n_1$ and $n_2$ are coprime, $N= lcm (n_1, n_2)$.

As \textbf{u}$_t$ = \textbf{a}$_t$.\textbf{b}$_t$. Let $g_1(x)$ generates a sequence \textbf{a}$_t$ of period $2^{p_1}-1$ having zeros $\alpha^{2^i}$ with $0 \leq i \leq p_1 - 1$ where $\alpha$ is an primitive element of order $2^{p_1}-1$. Similarly, let $g_2(x)$ generates a sequence \textbf{b}$_t$ of period $2^{p_2}-1$ having zeros $\beta^{2^j}$ with $0 \leq j \leq p_2 - 1$ where $\beta$ is an primitive element of order $2^{p_2}-1$.

As we know from equation~(\ref{DFT-S eq}),
 \begin{equation}\label{DFT-proof eq}
  U_{k} = \sum_{t=0}^{N-1} s_{t}\lambda^{tk} , \;\;\;  k = 0,1,2,.....,N-1
\end{equation}
where $\lambda$ is the root of polynomial $g(x)$.
 Through Berlekamp-Massey Algorithm, we know that sequence \textbf{u}$_t$ is generated by an LFSR defined over polynomial $g(x)$. In our particular case when $gcd(p, q)=1$, $g(x)$ is irreducible with degree $pq$. In such a case all roots of $g(x)$ can be written in terms of $\lambda = (\alpha \beta)^{2^{l}}$ with $0 \leq l \leq pq - 1$. Thus we can represent $\lambda = \alpha \beta$ in equation~(\ref{DFT-proof eq}) as
 \begin{equation}\label{DFT- eq3}
  U_{k} = \sum_{t=0}^{N-1} s_{t}(\alpha \beta)^{tk} , \;\;\;  k = 0,1,2,.....,N-1
\end{equation}
  As \textbf{u}$_t$ = \textbf{a}$_t$.\textbf{b}$_t$, equation~(\ref{DFT- eq3}) can be written as:
 \begin{equation*}\label{DFT- eq4}
  U_{k} = \sum_{t=0}^{N-1} (a_{t}b_{t} )(\alpha \beta)^{tk} , \;\;\;  k = 0,1,2,.....,N-1
\end{equation*}  
 
 \begin{equation*}\label{DFT- eq5}
 \Rightarrow U_{k} = \sum_{t=0}^{N-1} (a_{t}\alpha^{tk})(b_{t} \beta^{tk}) , \;\;\;  k = 0,1,2,.....,N-1
\end{equation*}
Due to orthogonality and cyclic behaviour of these fields, we will have:
 \begin{equation}\label{DFT- eq6}
 \Rightarrow U_{k} = \sum_{t=0}^{N-1} (a_{t}\alpha^{tk})\sum_{t=0}^{N-1}(b_{t} \beta^{tk}) , \;\;\;  k = 0,1,2,.....,N-1
\end{equation}
 By substituting equation~(\ref{DFT-A eq}) and~(\ref{DFT-B eq}) in (\ref{DFT- eq6}), we get
 \begin{equation}\label{DFT- eq7}
  U_{k} = A^{1}_k  B^{2}_k , \;\;\;  k = 0,1,2,.....,N-1.
\end{equation}
 
Thus from Equation~(\ref{DFT- eq7}), spectral components of \textbf{U} are non-zero at all indices where corresponding spectral components of \textbf{A} and \textbf{B} are non-zero. As all DFT spectral components of  \textbf{U} lie within $GF(2^n)$ and correspond to $\lambda^h$, where $0 \leq h \leq N$. Considering any $k^{th}$ component of \textbf{U} corresponding to non-zero DFT components of \textbf{A} and \textbf{B},  we only need to prove that both non-zero spectral components of \textbf{A} and \textbf{B} have one to one mapping to \textbf{U} through CRT. 
 
We now transform the relationship of \textbf{u}$_t$= \textbf{a}$_t$.\textbf{b}$_t$  into roots of associated polynomials of each sequence in their respective binary fields by representing \textbf{U}$_{k}$ $\in GF(2^m)$ in terms of $\lambda^h$ ($0 \leq h \leq N$), \textbf{A} $ \in GF(2^{p_1})$ in terms of $\alpha^{i}$ ($0\leq i \leq n_1$) and \textbf{B} $\in GF(2^{p_2})$ in terms of $\beta^{j}$ ($0 \leq j \leq n_2$). Thus we have:
\begin{equation}\label{gamma}
  \lambda^{d_k} = \alpha^{d_k}.\beta^{d_k} , \;\;\;  d_k = 0,1,2,.....,N-1.
\end{equation}
As we can write \textbf{u}$_t$= \textbf{a}$_{(t\; mod\; n_1)}$.\textbf{b}$_{(t \;mod\; n_2)}$ and $U_{k} = A_{(k\;mod\; n_1)} B_{(k\;mod \;n_2)}$, Equation~(\ref{gamma}) can be expressed as:

\begin{equation}\label{gamma-mod}
  \lambda^{d_k} = \alpha^{d_{k\;mod\; n_1}}.\;\beta^{d_{k\; mod\; n_2}} , \;\;\;  d_k = 0,1,2,.....,N-1,
\end{equation}
\begin{equation}\label{gamma-mod1}
 \Rightarrow \lambda^{d_k} = \alpha^{d_{k_{1}}}.\;\beta^{d_{k_{2}}} , \;\;\;  d_k = 0,1,2,.....,N-1.
\end{equation}
From equations~(\ref{gamma-mod}) and (\ref{gamma-mod1}), there exists a unique mapping for $k^{th}$ degrees of  $\lambda$, $\alpha$, and $\beta$ which can be computed using CRT as:
\begin{eqnarray*}
      d_k  &\equiv& d_{k_{1}}\;(mod\;n_1) \\
      d_k  &\equiv& d_{k_{2}}\;(mod\;n_2).   
               \end{eqnarray*}
\end{proof}
 Let we relate the results of Example~\ref{exm-2} with Theorem~\ref{CRT-Ptheorem}. Using \textbf{A}$_5$ and \textbf{B}$_5$ from Table~\ref{tab:ABC-1}, we can directly compute the \textbf{U}$_5$ using the Theorem Theorem~\ref{CRT-Ptheorem}. As \textbf{A}$_5 = \alpha^0$ and \textbf{B}$_5 = \beta^2$, we will use CRT as:
 \begin{eqnarray*}
      d_5  &\equiv& 0\;(mod\;3) \\
      d_5  &\equiv& 2\;(mod\;7).   
               \end{eqnarray*}
Thus  \textbf{d}$_5 = 9$ and therefore \textbf{U}$_5 = \lambda^9$. Similarly, all six non-zero spectral points of \textbf{U} at indices $5,\; 10,\; 13,\; 17,\; 19$ and $20$  can be computed directly using CRT relation of Theorem~\ref{CRT-Ptheorem}.

\section{Subspace Decomposition and CRT Based Spectral Convolutions}
In this section, application of our proposed method of computing spectral components for analysis of cryptographic sequences is discussed. Specifically for combiner generators,  relevance of CRT based commputations of DFT points in binary fields to their analysis through subspace decomposition is made. As booelan functions used in combiner generators comprise of different combination of smaller product sequences, our proosed method of computing  DFT helps to analyze the sequences in frequency  domain. Let we consider a simple boolean function as:
\begin{equation*}
 f(x_1,x_2,,x_{3}) = x_1x_2 + x_2x_3 + x_1x_3 \\
\end{equation*}
where $x_1 \in GF(2^p)$, $x_2 \in GF(2^q)$, $x_3 \in GF(2^r)$ and $f(x_1,x_2,x_{3}) \in GF(2^n)$. As the boolean function commprises of three product components, its frequency domain analysis can be based either on its composite form of $f(x_1,x_2,,x_{3})$ or on three sub-components of $x_1x_2$, $x_2x_3$ and $x_1x_3$. Let the resultant stream be $s_t \in GF(2^n)$ having a period $N|2^{n}-1$. To analyze the sequence $s_t$ in frequency domain classically, its associated minimal polynomial $g(x) \in GF(2^n)$ is required to be determined through Berlekamp-Massey algorithm. With primitive element of polynomial $g(x)$, each spectral component \textbf{S}$_k$ is calculated through Equation~(\ref{DFT-S eq}) in $GF(2^n)$. On the contrary, our proposed methodology of CRT based method of computing spectral components can be used to decompose the involved space of $GF(2^n)$ on the basis of its basic component fields of $GF(2^p)$, $GF(2^q)$ and $GF(2^r)$.   The analysis through sub space decomposition in this case reduces the complexity significantly.  To illustrate the idea, let we discuss the details with an example here:-
\begin{example}\label{exm-subspace}
Consider a Boolean function $f(x)$ combining three LFSR sequences \textbf{a} $\in GF(2^2)$ generated with $g_1(x) = x^2+x+1$, \textbf{b} $\in GF(2^3)$ generated with $g_2(x) = x^3+x+1$ and \textbf{c} $\in GF(2^5)$ generated with $g_3(x) = x^5+x^2+1$ to make the combined sequence $s_t \in GF(2^n)$ as:
\begin{equation}
s_t = f(a_t,b_t,c_t) = a_tb_t + b_tc_t + a_tc_t \;\;\;\;\mbox{with}\;\; 0 \leq t \leq N-1 
\end{equation}
where $N = lcm(2^{2-1},2^{3-1}, 2^{5-1})$ which is $651$ in this case here. We generate $651$ bits of $s_t$ and run Berlekamp-Massey algorithm. Linear complexity of \textbf{s}$_t$ is 31 and the corresponding minimum polynomial 
$g(x)$ = $x^{31} + x^{29} + x^{28} + x^{27} + x^{24} + x^{23} + x^{22} + x^{20} + x^{18} + x^{17} + x^{16} + x^{15} + x^{13} + x^{11} + x^{10} + x^9 + x^8 + x^7 + x^5 + x^4 + x^2 + x + 1 $.
 	
 	 As $f(a_t,b_t,c_t) = a_tb_t + b_tc_t + a_tc_t$, we will consider the component streams of \textbf{a}$_t$\textbf{b}$_t$, \textbf{b}$_t$\textbf{c}$_t$ and \textbf{a}$_t$\textbf{c}$_t$ one by one. As spectral points of \textbf{a}$_t$\textbf{b}$_t$ have already been computed in Example~\ref{exm-2}, let we denote it by \textbf{AB}. For \textbf{b}$_t$\textbf{c}$_t$, we will take individual spectra of \textbf{B} and \textbf{C} and will then use our method of computing non-zero spectral  points. Frequency domain representations of these sequences represented in powers of roots of their associated binary fields as $\beta \in GF(2^3)$, $\gamma \in GF(2^5)$ are:
\begin{enumerate}

\item \textbf{B}  $= 0,0,0,\beta^4, 0,\beta^2,\beta$.
\item \textbf{C} $= 0,0,0,0,0,0,0,0,0,0,0,0,0,0,0, \gamma^{29}, 0,0,0,0,0,0,0,\gamma^{30},0,0,0,\gamma^{15},0,\gamma^{23},\gamma^{27} $.
\end{enumerate}
The period $N_{bc} = lcm(7,31) = 217$. The associated minimal polynomial of stream \textbf{b}$_t$\textbf{c}$_t$ is $g_{bc}= x^{15} + x^{12} + x^{10} + x^7 + x^6 + x^2 + 1$.	From famous Blahut's Theorem~\cite{Blahut1983errorcontrol}, number of non-zero spectral components of \textbf{B}\textbf{C} must be $15$. We can directly compute all non-zero DFT points of \textbf{B}\textbf{C} through our CRT based method. Taking $3^{rd}$ non-zero component  of \textbf{B} and $15^{th}$ th non zero DFT point of \textbf{C}, we can determine the corresponding index of non-zero spectral component of \textbf{B}\textbf{C} through CRT as:
\begin{eqnarray*}
      k  &\equiv& 3\; (mod\;7) \\
      k  &\equiv& 15\; (mod\;31).   
               \end{eqnarray*}
               Thus $k = 108$ and value of spectral component  \textbf{BC}$_k$ is again computed by using our method of CRT based DFT points as: 
               \begin{eqnarray*}
      d_{108}  &\equiv& 4\;(mod\;7) \\
      d_{108}  &\equiv& 29\;(mod\;31).   
               \end{eqnarray*}
             We get $ d_{108} = \delta^{60}$. Similarly, all 15x non-zero DFT points of \textbf{BC} are determined using our method in in a Table~\ref{tab:BC} as:

\begin{table}[!ht]
\small
\begin{center}
\caption[Sample Table]{ 15 Spectral Components of \textbf{b}.\textbf{c}}
\begin{tabular}{|c| c| c| c| c| c| c| c| c| }
\hline
 
 Index & 108 & 178 & 213 & 122 & 185 & 201 & 54 & 89      \\ \hline

 \textbf{B} & $\beta^4$ & $\beta^2$ & $\beta$ & $\beta^4$ & $\beta^2$ & $\beta$ & $\beta^4$ & $\beta^2$     \\ \hline 

\textbf{C} & $\gamma^{29}$ & $\gamma^{30}$ & $\gamma^{15}$ & $\gamma^{23}$ & $\gamma^{27}$ & $\gamma^{29}$ & $\gamma^{30}$ & $\gamma^{15}$     \\ \hline \hline

\textbf{BC} & $\delta^{60}$ & $\delta^{123}$ & $\delta^{46}$ & $\delta^{116}$ & $\delta^{151}$ & $\delta^{184}$ & $\delta^{30}$ & $\delta^{170}$     \\ \hline \hline

Index & 215 & 61 & 139 & 209 & 27 & 153 & 216 &       \\ \hline

 \textbf{B} & $\beta$ & $\beta^4$ & $\beta^2$ & $\beta$ & $\beta^4$ & $\beta^2$ & $\beta$ &      \\ \hline 

\textbf{C} & $\gamma^{23}$ & $\gamma^{27}$ & $\gamma^{29}$ & $\gamma^{30}$ & $\gamma^{15}$ & $\gamma^{29}$ & $\gamma^{30}$ &     \\ \hline \hline

\textbf{BC} & $\delta^{23}$ & $\delta^{58}$ & $\delta^{29}$ & $\delta^{92}$ & $\delta^{15}$ & $\delta^{85}$ & $\delta^{120}$ &      \\ \hline \hline

\end{tabular}

\label{tab:BC}
\end{center}
\end{table}
Now consider \textbf{a}\textbf{c} which has period of $N_{ac} = lcm(3,31)=93$. The associated minimal polynomial of stream \textbf{a}$_t$\textbf{c}$_t$ is $g_{bc}= x^{10} +  x^{5} + x^4 + x^2 + x + 1$.	Number of non-zero spectral components of \textbf{A}\textbf{C} must be $10$. We can directly compute all non-zero DFT points of \textbf{B}\textbf{C} through our CRT based method. Taking $1^{st}$ non-zero component  of \textbf{A} and $15^{th}$ non zero DFT point of \textbf{C}, we can determine the corresponding index of non-zero spectral component of \textbf{B}\textbf{C} through CRT as:
\begin{eqnarray*}
      k  &\equiv& 1\; (mod\;3) \\
      k  &\equiv& 15\; (mod\;31).   
               \end{eqnarray*}
               Thus $k = 46$ and value of spectral component  \textbf{BC}$_k$ is again computed by using our method of CRT based DFT points as: 
               \begin{eqnarray*}
      d_{46}  &\equiv& 0\;(mod\;3) \\
      d_{46}  &\equiv& 29\;(mod\;31).   
               \end{eqnarray*}
            We get  $ d_{46} = \delta^{60}$. Similarly, all 10x non-zero DFT points of \textbf{AC} are determined using our method in Table~\ref{tab:AC} as:

\begin{table}[!ht]
\small
\begin{center}
\caption[Sample Table]{ 10 Spectral Components of \textbf{a}.\textbf{c}}
\begin{tabular}{|c| c| c| c| c| c| c| c| c| c| c| }
\hline
 
 Index & 46 & 85 & 58 & 91 & 61  & 77 & 23 & 89 & 29 & 92     \\ \hline

 \textbf{A} & $\alpha^0$ & $\alpha^0$ & $\alpha^0$ & $\alpha^0$ & $\alpha^0$  & $\alpha^0$ & $\alpha^0$ & $\alpha^0$ & $\alpha^0$ & $\alpha^0$    \\ \hline 

\textbf{C} & $\gamma^{29}$ & $\gamma^{30}$ & $\gamma^{15}$ & $\gamma^{23}$ & $\gamma^{27}$ & $\gamma^{29}$ & $\gamma^{30}$ & $\gamma^{15}$ & $\gamma^{23}$ & $\gamma^{27}$ \\ \hline \hline

\textbf{AC} & $\eta^{60}$ & $\eta^{30}$ & $\eta^{15}$ & $\eta^{54}$ & $\eta^{27}$ & $\eta^{60}$ & $\eta^{30}$ & $\eta^{15}$ & $\eta^{54}$ & $\eta^{27}$     \\ \hline

\end{tabular}
\label{tab:AC}
\end{center}
\end{table}              
Now we consider spectras of all three sequences togather:
  \begin{enumerate}
\item \textbf{A} $=0, \alpha^0, \alpha^0$.
\item \textbf{B}  $= 0,0,0,\beta^4, 0,\beta^2,\beta$.
\item \textbf{C} $= 0,0,0,0,0,0,0,0,0,0,0,0,0,0,0, \gamma^{29}, 0,0,0,0,0,0,0,\gamma^{30},0,0,0,\gamma^{15},0,\gamma^{23},\gamma^{27} $.
\end{enumerate}           
 With $1^{st}$   non-zero index of \textbf{A}, $3^{rd}$ of \textbf{B} and $15^{th}$ of \textbf{C}, we can determine non-zero index of \textbf{S} as :
 \begin{eqnarray*}
	  k  &\equiv& 1\; (mod\;3) \\   
      k  &\equiv& 3\; (mod\;7) \\
      k  &\equiv& 15\; (mod\;31).   
               \end{eqnarray*}  
Result of CRT is $k = 325$ indicating \textbf{S}$_{325}$ to be non-zero. Now taking $\alpha^0$, $\beta^4$  and $\gamma^{29}$, the value of \textbf{S}$_{325}$ is determined through CRT as:
\begin{eqnarray*}
	  d_k  &\equiv& 0\; (mod\;3) \\   
      d_k  &\equiv& 4\; (mod\;7) \\
      d_k  &\equiv& 29\; (mod\;31).   
               \end{eqnarray*}
Thus \textbf{S}$_{325}$ $=$ $\sigma^{60}$, where $\sigma$ is the root of polynomial $g(x)$. Similarly, corresponding to all non-zero indices of \textbf{A}, \textbf{B} and \textbf{C}, we will determine the spectral components of \textbf{S} through our CRT based methods mentioned at Table~\ref{tab:combiner-sigma}.

\begin{table}[ht]
\small
\begin{center}
\caption[Sample Table]{Non-zero Spectral Points of \textbf{S} with  element $\sigma$ of $g(x)$}
\begin{tabular}{|c| c| c| c| c| c| c| c| c| c| c|  }
\hline
 Index & 61 & 89 & 122 & 139 & 178 & 185 & 209 & 215 & 244 & 271   \\ \hline
  
Spectral Component & $\sigma^{492}$ & $\sigma^{387}$ & $\sigma^{333}$ &$ \sigma^{246}$ & $\sigma^{123}$ & $\sigma^{585}$ & $\sigma^{309}$ & $\sigma^{240}$ & $\sigma^{15}$ & $\sigma^{30}$   \\ \hline \hline 

Index & 278 & 325 & 356 & 370 & 395 & 418 & 430 & 433 & 461 & 488   \\ \hline
  
Spectral Component & $\sigma^{492}$ & $ \sigma^{60}$ & $\sigma^{246}$ &  $\sigma^{519}$ & $\sigma^{123}$ & $\sigma^{618}$ & $\sigma^{480}$ & $\sigma^{120}$ & $\sigma^{15}$ & $ \sigma^{30} $  \\ \hline \hline  

Index & 523 & 542 & 556 & 587 & 619 & 635 & 643 & 647 & 649 & 650  \\ \hline
  
Spectral Component & $\sigma^{387}$ & $\sigma^{60}$ & $\sigma^{333}$ & $\sigma^{519}$ &  $\sigma^{585}$ & $\sigma^{618}$ & $\sigma^{309}$  & $\sigma^{480}$ & $\sigma^{240}$ & $\sigma^{120}$  \\ \hline
\end{tabular}

\label{tab:combiner-sigma}
\end{center}
\end{table}
\end{example}
Correlating subspace components of \textbf{AB}, \textbf{BC} and \textbf{AC} to \textbf{S} from tables above, certain fixed mapping is observed between the spectral components from composite fields componnents to its constituent sub field components. For instance, decommposition of \textbf{S}$_{325}$ into its subspace components  \textbf{AB}$_{\{10\}}$,  \textbf{BC}$_{\{108\}}$ and  \textbf{AC}$_{\{46\}}$and then further to \textbf{A}$_1$, \textbf{B}$_3$ and \textbf{C}$_{15}$ is depicted in Figure~\ref{fig:mapping}. This fixed mapping is considered very useful for exploitation during analysis of the combinatorial sequences. 

\begin{figure}[H]\label{fig:mapping}
	  \centering
		      \includegraphics[scale=0.50]{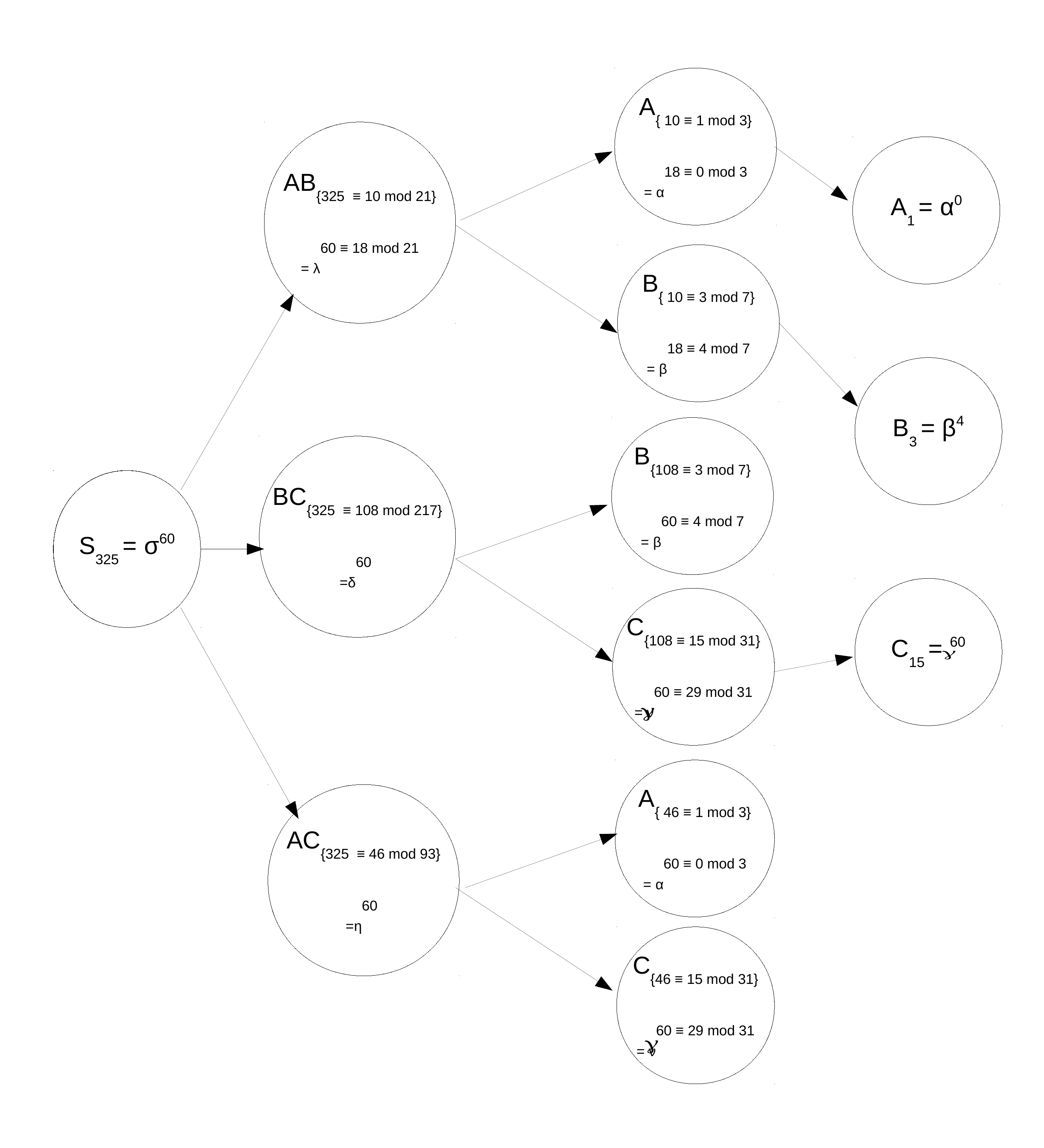}
	  \caption{Fixed Mapping in Subspaces of Spectral Components}
	\end{figure}

The conjugate property~\cite{golomb2005signal} of spectral sequence \textbf{S} can be verified from Figure~\ref{fig:conjugates} below, where trail of only fifteen components is shown. All other Spectral components in Table~\ref{tab:combiner-sigma} follow the same trail in succession. The advantage of this magical behaviour of DFT components in binary fields is drastic reduction in computations required for complete spectra \textbf{S}. In Table~\ref{tab:combiner-sigma} above,  we need to  compute \textbf{S}$_{61}$ only and spectra for all other indices can be computed  by conjugate operation.

\begin{figure}[H]\label{fig:conjugates}
	  \centering
		      \includegraphics[scale=0.80]{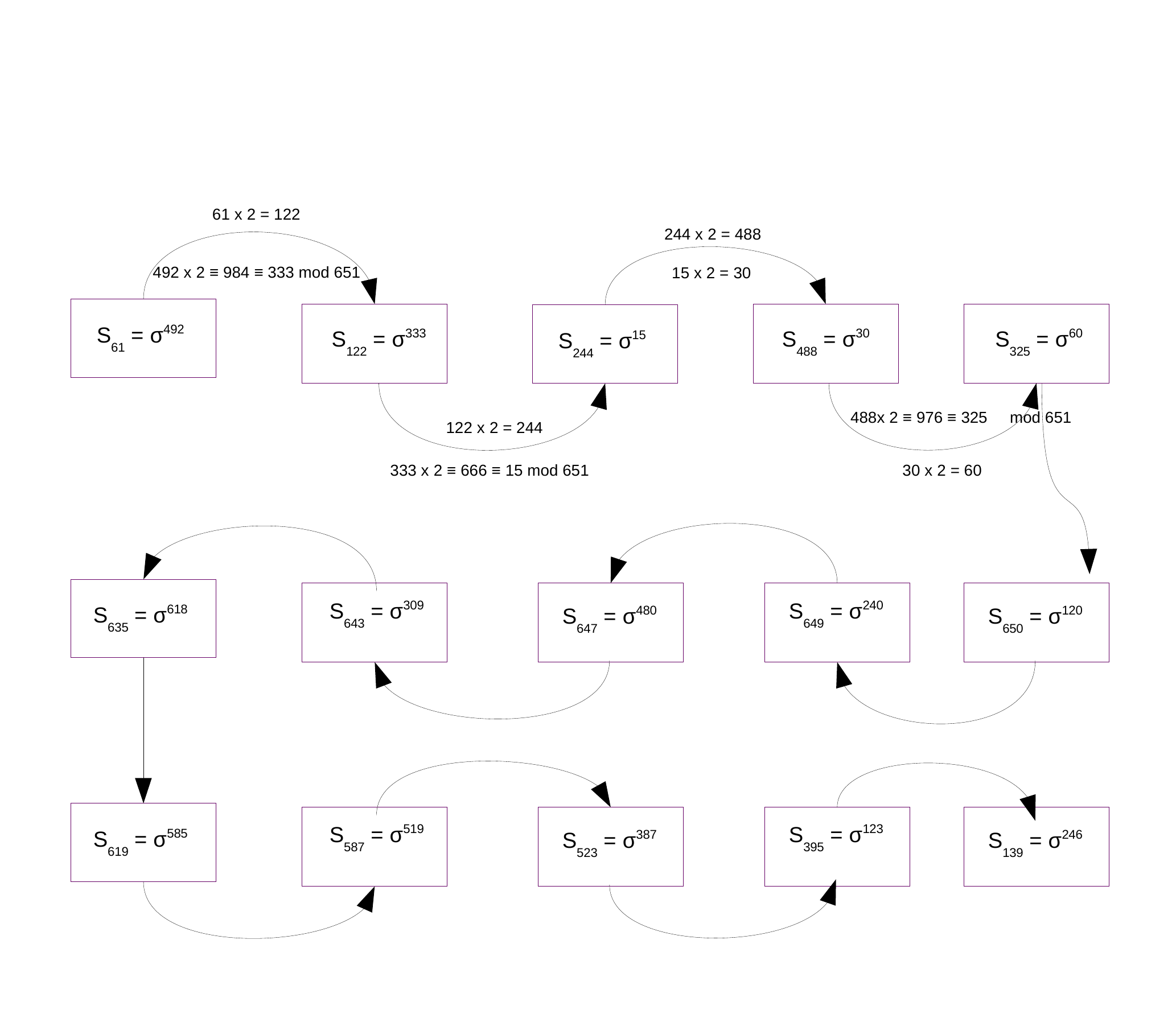}
	  \caption{Conjugate Property of Spectral Sequence \textbf{S} }
	\end{figure}

\section{Complexity Estimations}

Deatiled account of complexity of computing DFT in binary fields can be found in~\cite{gongcloser}, we reuse the results discussed therein to draw commparison of our proposed methodology in case of combinatorial sequences.
In terms of polynomail operations, DFT expression of Equation~(\ref{DFT- eq6}) can be expressed showing the relatioship between DFT and associated minimal polynomial as:

\begin{equation}\label{poly-DFT}
S_k = s(\gamma^{-k}),\;\;\; k = 0,1,\cdots,N-1
\end{equation}

where $s(x) = \sum^{N-1}_{t=0}s_tx^t$. The complexity for computing any $S_k$ using the Equation~(\ref{poly-DFT}) to evaluate s(x) at $\gamma^{-k}$ is determined as follows:-
\begin{enumerate}
\item The complexity for computing $\gamma^{-k}$ is $\mathcal{O}(\;log(k)\;\eta(n)\;)$ Xor operations,  where 
$\eta(n) = [\; n\; log\; n \;log log \;n\;]$ for two polynomials of degree $n$.

\item The complexity to evaluate $s(x)$ at $\gamma^{-k}$ is $\mathcal{O}(\;deg(s(x))\;\eta(n)\;)$ Xor operations.

\item Since the degree of $s(x)$ is on average $\leq N/2 $ and $log(k)\leq degree(s(x))$, the total complexity of computing any $S_k$ is:
\begin{equation}\label{eq:total complexity poly method}
\mathcal{O}[\,(\;log(k)\;\eta(n)\;)\;+\;(\;deg(s(x))\;\eta(n)\;)]\,\;\approx \mathcal{O}(N/2\;\eta(n))\;\mbox{Xor operations}.
\end{equation}
\end{enumerate}

Now,  we determine complexity of CRT based method of computing any DFT point \textbf{S}$_k$ for sequence \textbf{s}$_t\;=\;$\textbf{a}$_t$.\textbf{b}$_t$ with $0 \leq t \leq N-1$ and \textbf{a}$\in GF(2^p)$ , \textbf{b}$\in GF(2^q)$ and \textbf{s}$\in GF(2^n)$ as assumed in Section 2. As from Equation~(\ref{DFT-proof eq}), \textbf{S}$_k$ is computed through CRT relationship of \textbf{A}$_{k\; mod\; n_1}$ and \textbf{B}$_{k\; mod\; n_2}$, we have following relations of computations:
\begin{enumerate}
\item The complexity of computing \textbf{A}$_{k_1 = k\; mod\; n_1}$ is 
\begin{equation*}
\mathcal{O}[\,(\;log(k_1)\;\eta(p)\;)\;+\;(\;deg(a(x))\;\eta(p)\;)]\,\;\approx \mathcal{O}(n_1/2\;\eta(a))\;\mbox{Xor operations}.
\end{equation*}

\item The complexity of computing \textbf{B}$_{k_2 = k\; mod\; n_2}$ is 
\begin{equation*}
\mathcal{O}[\,(\;log(k_2)\;\eta(q)\;)\;+\;(\;deg(b(x))\;\eta(p)\;)]\,\;\approx \mathcal{O}(n_2/2\;\eta(q))\;\mbox{Xor operations}.
\end{equation*}

\item The computational cost for CRT is $\mathcal{O}(len(N)^2)$ where $len =$ number of bits required for representation of $N$.

\item Total complexity of computing \textbf{S}$_k$ through CRT based method is:

\begin{equation}\label{eq:CRT based complexity}
\approx \mathcal{O}[ (n_1/2\;\eta(p)) + (n_2/2\;\eta(q)) + (len(N)^2)]\;\mbox{Xor operations}.
\end{equation}
\item Total number of bits required in this case is $n_1\;+ n_2 < n$.
\end{enumerate}

Our results reveal that complexity of CRT based method of computing any DFT component of a combinatorial sequence through Equation~(\ref{eq:CRT based complexity}) is far less than complexity of Equation~(\ref{eq:total complexity poly method}). Let we briefly demonstrate the results through stream \textbf{bc} from Example~\ref{exm-subspace}. Taking first component of Table~\ref{tab:BC} which is \textbf{BC}$_{\{108\}} = \delta^{60} \in GF(2^{15})$, corresponding CRT based constituent spectral points are \textbf{B}$_{3} = \beta^{4} \in GF(2^{3})$ and \textbf{C}$_{15} = \gamma^{29} \in GF(2^{5})$.  Normally the spectral point \textbf{BC}$_{\{108\}}$ can be computed in $GF(2^{15})$ for which complete $217$ bits are required by using Equation~(\ref{DFT-S eq}). On the other hand, our proposed CRT based method of direct calculations of spectral points use the constituent DFT components in $GF(2^3)$ and $GF(2^5)$.  Comparison of complexities of these two methods for a case scenerio of \textbf{BC}$_{\{108\}}$ is given in Table~\ref{tab:compl-complexity} which clearly shows that the CRT based DFT method  is efficient than classical DFT computations in binary fields for combinatorial sequences.

\begin{table}[H]
\small
\begin{center}
\caption[Sample Table]{Comparison of Computational Complexities for DFT component \textbf{BC}$_{\{108\}}$}
\begin{tabular}{|c| c| c| }
\hline
  			& DFT based on Equation~(\ref{DFT-S eq}) &	CRT based DFT \\ \hline
   Number of Bits Required & 217 & 7 and 31 \\ \hline
  
Field & $GF(2^{15})$  & $GF(2^3)$ and $GF(2^5)$  \\ \hline  
&      &    For \textbf{B}$_{3} \approx$  $\mathcal{O}(7)$\\ 
 &      &    For \textbf{C}$_{15} \approx$  $\mathcal{O}(218)$\\
Total Complexity &   $\approx \mathcal{O}(12,500)$   &    For CRT step $\approx \mathcal{O}(64)$\\ 
 &  & $\mathcal{O}(7 + 218 + 64)$  \\
		& 			& $\approx \mathcal{O}(289)$ \\ \hline
\end{tabular}
\label{tab:compl-complexity}
\end{center}
\end{table}

\section{Conclusion}
In this paper, new method of computing convolution in frequency domain is presented for combinatorial sequences in binary fields. A simplest case of product of LFSR sequences being a fundamental block of any non-linear Boolean function is considered to demonstrate our results on convolution through DFT in binary fields.  CRT based novel approach to determine DFT points for combinatorial sequences has been illustrated  with associated mathematical rationale. With regard to analysis of combiner generators through subspace decomposition, applicability of our proposed methodology of computing spectral points  is made. We presented certain fixed mapping between the spectral components from composite fields to its decomposed subfield components and highlighted inherent structures in cyclic nature of finite fields which can be exploited during analysis of combiners. The comparison of our proposed CRT based methodology to known theory of DFT computations is discussed and it is proven that proposed CRT based method to compute convolution in binary fields is efficient than exiting methods of DFT computations. 

\bibliographystyle{unsrt}
\bibliography{ff}

\end{document}